\title{A Danzer set for Axis Parallel Boxes}
\date{}
\author{David Simmons}
\address{University of York, Department of Mathematics, Heslington, York YO10 5DD, UK}
\email{David.Simmons@york.ac.uk}
\urladdr{\url{https://sites.google.com/site/davidsimmonsmath/}}
\author{Yaar Solomon}
\address{Stony Brook University, Department of Mathematics, Stony Brook, NY}
\email{yaar.solomon@stonybrook.edu}
\urladdr{\url{http://www.math.stonybrook.edu/~yaars/}}
\newcommand{\Rplus}{[0,\infty)}
\newcommand{\N}{{\mathbb{N}}}
\newcommand{\Z}{{\mathbb{Z}}}
\newcommand{\R}{{\mathbb{R}}}
\newcommand{\RR}{{\mathcal{R}}}
\newcommand{\OO}{\mathcal{O}}
\newcommand{\ttt}{\mathbf{t}}
\newcommand{\xx}{\mathbf{x}}
\newcommand{\yy}{\mathbf{y}}
\newcommand{\df}{{\, \stackrel{\mathrm{def}}{=}\, }}
\newcommand{\minus}{\smallsetminus}
\newcommand{\Fin}{\{0,1\}^\Z_{Fin}}
\newcommand{\Finoverk}{\{0,1\}^{\ge k}_{Fin}}
\newcommand{\Finlessk}{\{0,1\}^{< k}_{Fin}}
\newcommand{\absolute}[1] {\left|{#1}\right|}
\newcommand{\norm}[1]{\left\|{#1}\right\|}
\newcommand{\Vol}{\mathrm{Vol}}
\DeclareMathOperator{\SL}{SL}
\newcommand {\ignore}[1]  {}
\theoremstyle{plain}
\newtheorem{thm}{Theorem}[section]
\newtheorem{cor}[thm]{Corollary}
\theoremstyle{definition}
\newtheorem{definition}[thm]{Definition}
\newtheorem{remark}[thm]{Remark}
\numberwithin{equation}{section}
\begin{document}
\begin{abstract}
We present concrete constructions of discrete sets in $\R^d$ ($d\ge 2$) that intersect every aligned box of volume $1$ in $\R^d$, and which have optimal growth rate $O(T^d)$.
\end{abstract}
\maketitle
\section{Introduction}\label{sec:introduction}
A set $D\subseteq\R^d$ is called a \emph{Danzer set} if there exists an $s>0$ such that $D$ intersects every convex set of volume $s$. The question whether a discrete Danzer set in $\R^d$ of growth rate $O(T^d)$ exists is due to Danzer, see \cite{CFG, Gowers, GL}, and has been open since the sixties.

There are several variants of this question. One is to weaken the Danzer property in the following sense. We say that $Y\subseteq\R^d$ is a \emph{dense forest} if there is a function $\varepsilon=\varepsilon(T)\xrightarrow{T\to\infty}0$ so that for every $x\in\R^d$ and for every direction $v\in\mathcal{S}^{d-1}$, the distance between $Y$ and the line segment of length $T$ which starts at $x$ and proceeds in direction $v$ is less than $\varepsilon(T)$. Intuitively, as it was presented in \cite{Bishop}, $T$ is the maximal distance that a man can see when standing in a forest with a trunk of radius $\varepsilon$ located at each element of $Y$. Note that every Danzer set is a dense forest with $\varepsilon(T)=O(T^{-1/(d - 1)})$, and a dense forest with $\varepsilon(T)=O(T^{-(d - 1)})$ is a Danzer set.\footnote{The second statement is proven as follows: let $D$ be a dense forest with $\varepsilon(T) = O(T^{-(d - 1)})$, and let $R\subseteq\R^d$ be a box (i.e. a parallelotope with adjacent faces orthogonal) with volume $s$ and shortest edge length $2\varepsilon$. Since the volume of a box is the product of the length of its sides, $R$ has an edge of length at least $T:= \left(\frac{s}{2\varepsilon}\right)^{1/(d-1)}$. Let $L$ be the line segment parallel to this edge, passing through the center of $R$, and of length $T-2\varepsilon$. If $R$ does not contain any points of $D$, then the distance from $L$ to $D$ is at least $\varepsilon$, which implies that $\varepsilon\leq O(T^{-(d - 1)}) = O(\varepsilon/s)$. For $s$ sufficiently large, this is a contradiction, so every box of sufficiently large volume intersects $D$. Since every convex set contains a box of volume at least a constant times the volume of the convex set, this shows that $D$ is a Danzer set.} A construction of a dense forest of growth rate $O(T^d)$ is given in \cite{SW}, and another construction in the plane follows from the proof of \cite[Lemma 2.4]{Bishop}.

One other interesting direction is to look for Danzer sets with faster growth rates. A Danzer set of growth rate $O(T^d(\log T)^{d-1})$ is given in \cite{BW}; this bound was improved recently in \cite{SW} by a probabilistic construction that gives growth rate $O(T^d\log T)$.

Another approach in trying to weaken the Danzer problem is by hitting a smaller family of sets, instead of all the convex sets. John's theorem \cite{John} implies that replacing convex sets by boxes\footnote{A \emph{box} in $\R^d$ is the image of an aligned box $[a_1,b_1]\times\cdots\times[a_d,b_d]$ under an orthogonal matrix.} gives an equivalent question. In this note we consider a question that arises naturally from the Danzer problem. We say that $D\subseteq\R^d$ is an \emph{align-Danzer set} if there is an $s>0$ such that $D$ intersects every aligned box of volume $s$. In our main results, Theorem \ref{thm:main1} and Theorem \ref{thm:main2} below, we present simple constructions for align-Danzer sets in $\R^d$ of growth rate $O(T^d)$. Neither of these constructions is new, but the viewpoint of seeing them as connected with Danzer's problem is new.

We denote by $\Fin$ the subset of $\{0,1\}^\Z$ consisting of those bi-infinite sequences that contain only finitely many $1$s.

\begin{thm}\label{thm:main1}
The set 
\[
D\df \left\{\left(\pm\sum_{n\in\Z}a_n 2^n,\pm\sum_{n\in\Z}a_n 2^{-n}\right)\in\R^2:(a_n)\in\Fin\right\}
\]
is an align-Danzer set in $\R^2$ of growth rate $O(T^2)$.
\end{thm}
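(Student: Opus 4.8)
The plan is to establish the two assertions separately: the growth bound is a short counting argument, and the align-Danzer property is the substantive part.

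\emph{Growth rate.} Every nonzero point of $D$ has the form $(\pm p,\pm q)$ with $p=\sum_{n\in S}2^n$ and $q=\sum_{n\in S}2^{-n}$ for a nonempty finite $S\subset\Z$. If this point lies in $[-T,T]^2$ then $2^{\max S}\le p\le T$ and $2^{-\min S}\le q\le T$, so $S$ is contained in the set of integers $n$ with $|n|\le\log_2 T$, which has at most $2\log_2 T+1$ elements. Since $S$ determines $(p,q)$ and there are four sign choices, $\#\bigl(D\cap[-T,T]^2\bigr)\le 1+4\cdot 2^{\,2\log_2 T+1}=O(T^2)$; in particular $D$ is discrete.

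\emph{Reduction to the positive quadrant.} Note that $(0,0)\in D$ (take $(a_n)\equiv 0$) and that $D$ is invariant under the four maps $(x,y)\mapsto(\pm x,\pm y)$. Hence, if an axis-parallel box meets both coordinate axes it contains $(0,0)\in D$; otherwise, after reflecting in the axes and passing to a sub-box, it may be assumed to have the form $R=[a,b]\times[c,d]$ with $0<a<b$, $0<c<d$ and $(b-a)(d-c)\ge V/4$, where $V$ is its volume. It therefore suffices to produce an absolute constant $s_0$ such that every such $R$ with $(b-a)(d-c)\ge s_0$ contains a point of $D$; then $D$ intersects every aligned box of volume $\ge 4s_0$, and so is an align-Danzer set.

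\emph{The main step.} For a positive dyadic rational $p$ with binary digits $(p_n)_{n\in\Z}$, set $\phi(p)=\sum_n p_n 2^{-n}$, the number obtained by reversing the binary expansion of $p$ about the radix point; then the nonzero points of $D$ are exactly the $(\pm p,\pm\phi(p))$. Choose integers $k_1,k_2$ with $2^{k_i}$ comparable to the respective side lengths — small enough that $[a,b]$ contains an interval $[a_0,a_0+2^{k_1}]$ with $a_0$ a positive integer multiple of $2^{k_1}$ and that $[c,c+2^{k_2}]\subseteq[c,d]$, but large enough that $(b-a)(d-c)\ge s_0$ forces $k_1+k_2\ge 1$ (this is what pins down $s_0$ as a small absolute constant). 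Now write $p=a_0+r$ with $r$ a dyadic number in $[0,2^{k_1})$: the digits of $p$ at positions $\ge k_1$ are those of $a_0$, while those at positions $\le k_1-1$ are the free digits of $r$. Reversing, $\phi(p)=\beta+\sigma$, where $\beta\in(0,2^{1-k_1})$ is the fixed tail contributed by the digits of $a_0$ (which sit at positions $\ge k_1$, hence below the radix point after reversal) and $\sigma=\sum_{m\ge 1-k_1}r_{-m}2^m$ runs over all of $2^{1-k_1}\Z_{\ge 0}$ as $r$ varies. Since $2^{1-k_1}\le 2^{k_2}$, the arithmetic progression $\beta+2^{1-k_1}\Z_{\ge 0}$ has a term in $[c,c+2^{k_2}]$; choosing $r$ to realise such a term yields $p\in[a_0,a_0+2^{k_1}]\subseteq[a,b]$ with $\phi(p)\in[c,c+2^{k_2}]\subseteq[c,d]$, i.e. $(p,\phi(p))\in D\cap R$, as required.

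\emph{Where the difficulty lies.} The counting and the reduction to the positive quadrant are routine once one uses $(0,0)\in D$ and the sign symmetries. The heart of the matter is the bookkeeping in the main step: identifying $\phi$ with binary reversal, checking that freezing the high-order digits of $p$ while letting the low-order digits run turns $\phi(p)$ into a full one-sided arithmetic progression with common difference $2^{1-k_1}$ and phase $\beta$ confined to a single period, and then matching that difference against the available width $2^{k_2}$ in the $y$-direction via the volume bound. No effort to optimise $s_0$ is needed — any fixed $s$ beyond a small threshold works.
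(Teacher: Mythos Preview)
Your argument is correct and, for the align-Danzer property, is essentially the paper's proof in different dress: both split the digit sequence at an index $k\approx\log_2(\text{horizontal side})$, choose the digits at positions $\ge k$ to place the first coordinate, choose those at positions $<k$ to place the second, and check that the cross contributions (your phase $\beta$, resp.\ the paper's tail sums $\sum_{n<k}a_n2^n$ and $\sum_{n\ge k}a_n2^{-n}$) are absorbed by the available slack. The only real difference is in the growth estimate: the paper exhibits a bijection from $D_+$ to $\N\times 2\N$ that moves each point by at most $\sqrt5$, whereas you count support sets $S\subseteq\{n:|n|\le\log_2 T\}$ directly; both are immediate, though the paper's version yields the extra structural fact that $D_+$ is a bounded perturbation of a quarter-lattice.
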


The set in Theorem \ref{thm:main1} is a variant of the binary version of the well-known van der Corput sequence (see e.g. \cite{vdC}).

Although the set $D$ in Theorem \ref{thm:main1} is given very explicitly, and the proof is by elementary means, it only solves the problem in dimension $2$, and no simple higher-dimensional extension comes to mind. To solve the problem in higher dimensions we use a dynamical approach. 

For a fixed $d\ge 2$ let $A\subseteq \SL_d(\R)$ be the subgroup of diagonal matrices with positive entries, and let $\Omega$ be the space of all lattices in $\R^d$. 
\begin{definition}[{\cite[p.6]{Skriganov}}]\label{def:admissible}
A lattice $\Lambda\in\Omega$ is \emph{admissible} if its orbit under $A$ is precompact in $\Omega$.
\end{definition}


\begin{thm}[Corollary of {\cite[Theorem 1.2]{Skriganov}}]\label{thm:main2}
For every $d\ge 2$ there exists an admissible lattice in $\R^d$, and every admissible lattice is an align-Danzer set.
\end{thm}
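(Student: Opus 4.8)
The plan is to break the statement into two parts: existence of an admissible lattice in every dimension, and the fact that admissibility implies the align-Danzer property.

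For existence: Let me think. An admissible lattice is one whose $A$-orbit is precompact. By Mahler's compactness criterion, precompactness of $A\cdot\Lambda$ is equivalent to a uniform lower bound on the length of the shortest nonzero vector over the whole orbit — equivalently, $\inf_{v\in\Lambda\setminus\{0\}} \prod_i |v_i| > 0$ where the product is over coordinates (since the $A$-action scales coordinates while preserving their product). So I need a lattice $\Lambda$ with $\inf_{v\in\Lambda\setminus\{0\}}\prod_{i=1}^d|v_i|>0$. The classical construction: take a totally real number field $K$ of degree $d$ over $\Q$, with ring of integers $\OO_K$, and let $\sigma_1,\dots,\sigma_d$ be the real embeddings; then $\Lambda = \{(\sigma_1(\alpha),\dots,\sigma_d(\alpha)) : \alpha\in\OO_K\}$ is a lattice in $\R^d$, and for nonzero $\alpha$ the product $\prod_i\sigma_i(\alpha)$ is the field norm $N_{K/\Q}(\alpha)$, a nonzero rational integer, hence of absolute value $\ge 1$. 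So this $\Lambda$ is admissible. (One can also invoke \cite[Theorem 1.2]{Skriganov} directly, but the number-field construction is the standard explicit witness.)

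For the align-Danzer property: Suppose $\Lambda$ is admissible, so $c \df \inf_{v\in\Lambda\setminus\{0\}}\prod_i|v_i|>0$. I claim $\Lambda$ intersects every aligned box of volume $s$ for $s$ large enough. First reduce to boxes centered at the origin: an arbitrary aligned box $B=\prod_i[a_i,b_i]$ differs from a centered box $B_0=\prod_i[-r_i,r_i]$ (with $r_i=(b_i-a_i)/2$) by a translation; and $\Lambda$ meets $B$ iff the translated lattice meets $B_0$. But translates of $\Lambda$ are exactly the points of the torus $\R^d/\Lambda$, so it suffices to show that a centered box of volume $s$ is so large that it surjects onto $\R^d/\Lambda$ under the quotient map — actually more directly: I will show that any centered box of volume $> $ (some constant depending on $c$ and the covolume) contains a nonzero lattice point of $\Lambda$, and handle translated boxes by noting that a box of volume $s$ centered anywhere contains a centered sub-box of volume $s/2^d$ after we instead argue via the torus. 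Let me restructure: given aligned box $B$ of volume $s$, write $B\supseteq x+B_0$ where $B_0$ is centered of volume $s$; $\Lambda\cap B\neq\emptyset$ iff $(\Lambda - x)\cap B_0\neq\emptyset$. Now $B_0 = \prod[-r_i,r_i]$ with $\prod r_i = s/2^d$. Apply the diagonal matrix $g=\mathrm{diag}(1/r_1,\dots,1/r_d)$ (rescaled to have determinant one — absorb the determinant, which is $2^d/s$ up to sign, into the estimate): $g B_0$ is a centered box of volume $2^d$, of the form $[-t_1,t_1]\times\cdots\times[-t_d,t_d]$ with $\prod t_i = 2^d \cdot(\det g)^{-1}\cdot\ldots$ — the cleanest formulation is that after rescaling we may assume $B_0$ is the cube $[-t,t]^d$ with $t^d = s^{?}$, acting on the admissible lattice $g\Lambda$ whose coordinate-product infimum is still $c$ (the $A$-action preserves it) but whose covolume is now $(\det g)\cdot\mathrm{covol}(\Lambda)$.

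The core estimate, then, is a Minkowski-type/pigeonhole argument: a centered cube $C=[-t,t]^d$ of volume $(2t)^d$ much larger than the covolume of a lattice $\Lambda'$ with $\inf\prod|v_i|\ge c$ must contain a nonzero point of $\Lambda'$. If not, then $C\cap\Lambda' = \{0\}$, and by Minkowski's convex body theorem $(2t)^d \le 2^d\,\mathrm{covol}(\Lambda')$; but I need to play covolume against $c$. The right move: choose the rescaling $g\in A$ so that $g\Lambda'$ has covolume equal to that of $\Lambda$ (i.e. $\det g=1$, impossible in general since $B_0$ has the wrong volume) — so instead allow $\det g = 2^d/s$ and track it. Then $gC$ — I am going in circles; let me just state the mechanism. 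The key point, which is the main obstacle, is the following Minkowski-flavored lemma: there is a constant $M=M(d)$ such that if $\Lambda'$ is a unimodular lattice with $\inf_{v\in\Lambda'\setminus 0}\prod|v_i|\ge c$, then every centered aligned box of volume at least $M/c$ — wait, volume $\ge M$ (since $c\le$ covol $=1$ forces $c\le 1$, a box of volume $2^d$ always works by Minkowski regardless of $c$) — hmm, actually Minkowski already gives that a centered box of volume $>2^d$ meets $\Lambda'\setminus\{0\}$, with no need for $c$ at all. So for \emph{centered} boxes admissibility is not even needed; the role of admissibility (precompactness of the $A$-orbit) is precisely to pass from centered boxes to \emph{translated} boxes: translating by $x$ and then rescaling by $g\in A$, the lattice $g\Lambda$ ranges over a precompact family, so its systole is bounded below uniformly, hence (by Mahler) the translated-and-rescaled copies $g(\Lambda - x)$ all lie in a fixed compact set of the space of affine lattices... no — the cleanest correct argument is: $gB_0$ becomes a cube of fixed size, $g\Lambda$ lies in a fixed compact subset $K\subseteq\Omega$ by admissibility, and the translate $g(\Lambda-x)$ is an affine lattice; by compactness of $K$ there is a uniform covering radius $\rho$ for all lattices in $K$, so if the fixed cube has inradius $>\rho$ it contains a point of every affine translate $g(\Lambda - x)$, i.e. of $gB_0$ we get $g(\Lambda-x)\cap (\text{cube})\ne\emptyset$, equivalently $\Lambda\cap B_0\ne\emptyset$. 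Thus any box $B$ of volume $s\ge s_0(d,\Lambda)$ (making the fixed cube's inradius exceed $\rho$) meets $\Lambda$.

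So the steps in order are: (1) recall Mahler's criterion and deduce that admissibility $\iff$ positive infimum of the coordinate-product over the lattice; (2) construct an explicit admissible lattice from a totally real degree-$d$ field via the norm being an integer (or cite \cite[Theorem 1.2]{Skriganov}); (3) for a given aligned box $B$ of volume $s$, reduce to a centered box $B_0$ of volume $s$ plus a translation by some $x$; (4) apply $g\in A$ sending $B_0$ to a fixed cube $Q$, noting $g\Lambda$ stays in a fixed compact set $K$ by admissibility; (5) use a uniform covering radius $\rho = \rho(K)$ to conclude that if $s$ is large enough that $Q$ has inradius $>\rho$, then $Q$ contains a point of the affine lattice $g(\Lambda-x)$, hence $B_0$ contains a point of $\Lambda - x$, hence $B$ contains a point of $\Lambda$. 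The main obstacle is step (5) — correctly organizing the compactness-to-uniform-covering-radius passage and making sure the determinant bookkeeping under $g$ is consistent so that "large volume box" really does become "cube with large inradius"; everything else is standard geometry of numbers.
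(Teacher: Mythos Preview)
Your final five-step outline is correct and matches the paper's proof exactly: the same totally-real-number-field construction for existence, and for the align-Danzer property the paper likewise applies a diagonal $g_\ttt\in A$ to turn the box into a cube and then uses precompactness of the $A$-orbit (via Mahler) to get a uniform bound on the codiameter (your covering radius $\rho$), which bounds the cube's edge and hence the original box's volume. One minor wording slip: since $g\in A\subseteq\SL_d(\R)$ preserves volume, the cube $Q=gB_0$ is not ``fixed'' but has volume $s$ and inradius $\tfrac12 s^{1/d}$ --- which is precisely what makes step~(5) go through when $s>(2\rho)^d$.
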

Although Theorem \ref{thm:main2} is a direct consequence of \cite[Theorem 1.2]{Skriganov}, we provide the proof since it is elementary. We also refer to the discussions in \cite[p. 24-31]{GL} for additional reading. 

As a direct consequence we reprove a result in computational geometry, that follows from a result of Halton on low discrepancy sequences, see \cite{Ha}. We remark that Corollary \ref{cor:computational_geometry} is not stated in \cite{Ha}, but it is well known in the computational geometry and combinatorics communities that Halton's construction satisfies it. 
\begin{cor}\label{cor:computational_geometry}
For every $\varepsilon>0$ there are $\varepsilon$-nets of optimal sizes $O(1/\varepsilon)$ for the range space $(X,\RR)$, where $X=[0,1]^d$ and $\RR=\{\text{aligned boxes}\}$.
\end{cor}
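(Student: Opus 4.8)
The plan is to derive the $\varepsilon$-net bound from the existence of an admissible lattice $\Lambda \subseteq \R^d$ provided by Theorem \ref{thm:main2}. Recall that an $\varepsilon$-net for the range space $(X,\RR)$ with $X = [0,1]^d$ and $\RR$ the aligned boxes is a finite set $N \subseteq X$ such that every aligned box $B$ with $\Vol(B \cap X) \geq \varepsilon$ (equivalently, relative Lebesgue measure $\geq \varepsilon$) contains a point of $N$. So the goal is to produce, for each $\varepsilon > 0$, such a set of cardinality $O(1/\varepsilon)$, with the implied constant depending only on $d$.

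The key point is a scaling argument. By Theorem \ref{thm:main2}, fix once and for all an admissible lattice $\Lambda$, so there is $s > 0$ with $\Lambda$ meeting every aligned box of volume $s$ in $\R^d$. First I would rescale: for a parameter $t > 0$ the dilated set $t\Lambda$ meets every aligned box of volume $t^d s$. Choosing $t$ so that $t^d s = \varepsilon$, i.e. $t = (\varepsilon/s)^{1/d}$, we get that $t\Lambda$ intersects every aligned box of volume $\varepsilon$, in particular every aligned box $B$ with $\Vol(B \cap X) \geq \varepsilon$ — note that such a $B$ need not itself have volume $\geq \varepsilon$, so here I would instead intersect with $X$: any aligned box meeting $X$ in a set of volume $\geq \varepsilon$ contains an aligned sub-box of volume $\geq \varepsilon$ lying inside $X$ up to a harmless adjustment, or more simply one observes that $B \cap X$ contains an aligned box of volume $\geq \varepsilon$ whenever $B$ is itself aligned, since the intersection of two aligned boxes is an aligned box. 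Then set $N \df (t\Lambda) \cap X$. By construction $N$ is an $\varepsilon$-net.

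It remains to bound $|N|$. Since $\Lambda$ is a lattice in $\R^d$ with some fixed covolume $v = \operatorname{covol}(\Lambda)$, the dilate $t\Lambda$ has covolume $t^d v$, and a standard lattice-point count gives $|(t\Lambda) \cap X| = \Vol(X)/(t^d v) + O(\text{surface terms}) = 1/(t^d v) + O(t^{-(d-1)})$; plugging in $t^d = \varepsilon/s$ yields $|N| = s/(v\varepsilon) + O(\varepsilon^{-(d-1)/d}) = O(1/\varepsilon)$, with the constant depending only on $d$ (through the fixed $\Lambda$, which Theorem \ref{thm:main2} furnishes in each dimension). Finally, optimality of the bound $O(1/\varepsilon)$ is classical: packing $\sim 1/\varepsilon$ disjoint aligned boxes of volume $\varepsilon$ into $[0,1]^d$ forces any $\varepsilon$-net to have at least that many points.

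The main obstacle I expect is purely bookkeeping: making the passage from "box of volume $\varepsilon$ in $\R^d$" to "box of relative measure $\varepsilon$ inside $X$" fully rigorous, and controlling the lower-order term in the lattice-point count so that it is genuinely absorbed into $O(1/\varepsilon)$ rather than dominating for small $\varepsilon$ — this is fine since $(d-1)/d < 1$, but it should be stated carefully. Everything else is a one-line dilation of Theorem \ref{thm:main2}.
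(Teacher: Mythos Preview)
Your proposal is correct and follows essentially the same approach as the paper, which dispatches the corollary in one line: ``This Corollary follows directly from the above Theorems by restricting to a bounded cube and rescaling to $[0,1]^d$.'' Your dilation of $\Lambda$ by $t=(\varepsilon/s)^{1/d}$ followed by intersection with $X$ is exactly this restrict-and-rescale step (viewed dually), and your lattice-point count is just the explicit form of the $O(T^d)$ growth rate; the paper supplies no further detail than this.
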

This Corollary follows directly from the above Theorems by restricting to a bounded cube and rescaling to $[0,1]^d$. We refer to \cite{AS,M} for a more comprehensive reading about the notions in Corollary \ref{cor:computational_geometry}.

\begin{remark}
Align Danzer sets in $\R^d$ of growth rate $O(T^d)$ can also be constructed by modifying the proof of \cite[Theorem 1.4]{SW} to work for aligned boxes and then combining with the result of \cite{Ha} or \cite{vdC} in the unit cube. Nonetheless, our constructions here are simple and the proofs are straightforward. 
\end{remark}

\subsection{Acknowledgements}
We thank Sathish Govindarajan, Shakhar Smorodinsky, and Barak Weiss for useful discussions that helped us understand the status of the problem. We also thank the referee for helpful comments.

\section{Proof of Theorem \ref{thm:main1}}
\begin{proof}[Proof of Theorem \ref{thm:main1}]
We first show that $D$ intersects every aligned box of volume $64$. It suffices to show that 
\[D_+\df \left\{\left(\sum_{n\in\Z}a_n2^n,\sum_{n\in\Z}a_n2^{-n}\right)\in\R^2:(a_n)\in\Fin\right\}\]
intersects every aligned box of volume $16$ that sits in $\R^2_+\df \Rplus^2$. 

Let $R\subseteq\R^2_+$ be an aligned box of volume $16$, and denote its lower left vertex by $(x,y)$. Let $t>0$ be such that the lower right and the upper left vertices of $R$ are $(x+t,y)$ and $(x,y+\frac{16}{t})$ respectively. We define a sequence $(a_n)_{n\in\Z}\in\Fin$ so that $\left(\sum_{n\in\Z}a_n2^n,\sum_{n\in\Z}a_n2^{-n}\right)\in R$.

For each integer $k$, we denote by $\Finoverk$ and $\Finlessk$ the subsets of $\{0,1\}^{\geq k}$ and $\{0,1\}^{< k}$, respectively, consisting of those sequences that contain only finitely many $1$s. Here $\{0,1\}^{\geq k}$ is the set of all sequences in $\{0,1\}$ of the form $(a_k,a_{k + 1},\ldots)$, and $\{0,1\}^{< k}$ is the set of all sequences in $\{0,1\}$ of the form $(\ldots, a_{k - 2},a_{k - 1})$.

Let $k\in\Z$ be such that $2^k\le \frac{t}{2}<2^{k+1}$. Observe that $\sum_{n<k}a_n2^n<2^k\le\frac{t}{2}$ for any sequence $(a_n)$ in $\Finlessk$, and that the interval $(x,x+\frac{t}{2})$ intersects the set 
\[2^k\N=\left\{\sum_{n\ge k} a_n2^n:(a_n)\in\Finoverk\right\}.\]
Then we may choose the $a_n$s for $n\ge k$ so that $\sum_{n\ge k} a_n2^n\in(x,x+\frac{t}{2})$, and thus for any choice of the $a_n$s for $n<k$ (and in particular for the choice described below) we have $\sum_{n\in\Z}a_n2^n\in(x,x+t)$.

The analysis of the $y$ coordinate is similar. Here $2^{-k-1}<\frac{2}{t}\le 2^{-k}$, and therefore $2^{-k+1}<\frac{8}{t}\le 2^{-k+2}$. We have $\sum_{n\ge k}a_n2^{-n}<2^{-k+1}<\frac{8}{t}$ for any sequence $(a_n)$ in $\Finoverk$, and the interval $(y,y+\frac{8}{t})$ intersects the set 
\[2^{-k+1}\N=\left\{\sum_{n<k} a_n2^{-n}:(a_n)\in\Finlessk\right\}.\]
Then we may choose the $a_n$s for $n<k$ so that $\sum_{n<k}a_n2^{-n}\in(y,y+\frac{8}{t})$, and thus for any choice of the $a_n$s for $n\ge k$ (and in particular for the choice described above) we have $\sum_{n\in\Z}a_n2^{-n}\in(y,y+\frac{16}{t})$.

\ignore{
Let $R\subseteq\R^2_+$ be an aligned box of volume $16$, and denote its lower left vertex by $(x,y)$. Without loss of generality we assume that the upper left and the lower right vertices of $R$ are $(x,y+t)$ and $(x+\frac{16}{t},y)$ respectively, for some $t\ge 1$, and we define a sequence $(a_n)_{n\in\Z}$ so that $f[(a_n)]\in R$. 
Let $t>0$ be such that the upper left and the lower right vertices of $R$ are $(x,y+t)$ and $(x+\frac{16}{t},y)$ respectively. We define a sequence $(a_n)_{n\in\Z}\in\Fin$ so that $f[(a_n)]\in R$.

Let $k\in\N$ be such that $2^k\le t<2^{k+1}$, then $2^{-k+3}<\frac{16}{t}\le 2^{-k+4}$. Choose $b\in\N$ such that $(b-1)2^{k-2}\le y<b2^{k-2}$ and write $b$ in its binary representation 
\[b=\sum_{i=0}^\infty b_i2^i,\quad b_i\in\{0,1\}.\]
For $n\le -(k-2)$ set $a_{n}=b_{-n-(k-2)}$, then
\[\sum_{n\le -(k-2)}a_n2^{-n}\stackrel{m=-n}=\sum_{m=k-2}^\infty b_{m-(k-2)}2^m=\underbrace{\sum_{i=0}^\infty b_i2^i}_{b}2^{k-2}, \]
which implies that
\[y<\sum_{n\le -k+2}a_n2^{-n}\le y+2^{k-2}.\]
and therefore
\begin{equation}\label{eq:y_in_range}
y<\sum_{n\in\Z}a_n2^{-n}=\sum_{n\le -k+2}a_n2^{-n}+\sum_{n=-k+3}^{\infty}a_n2^{-n}<y+3\cdot2^{k-2}<y+t.
\end{equation}

Similarly, choose $c\in\N$ such that $(c-1)2^{-k+3}\le x<c2^{-k+3}$ and write $c$ in its binary representation 
\[c=\sum_{i=0}^\infty c_i2^i,\quad c_i\in\{0,1\}.\]
For $n\ge -k+3$ set $a_{n}=c_{n+k-3}$, then
\[\sum_{n=-k+3}^\infty a_n2^n=\sum_{n=-k+3}^\infty c_{n+k-3}2^{n+k-3}2^{-k+3}=\underbrace{\sum_{i=0}^\infty c_i2^i}_{c}2^{-k+3}, \]
hence
\[x<\sum_{n=-k+3}^\infty a_n2^n\le x+2^{-k+3}.\]
and therefore
\begin{equation}\label{eq:x_in_range}
x<\sum_{n\in\Z}a_n2^n=\underbrace{\sum_{n=-\infty}^{-k+2}a_n2^n}_{<2^{-k+3}}+\sum_{n=-k+3}^\infty a_n2^n<x+2^{-k+4}<x+\frac{16}{t}.
\end{equation}
Using the sequences $(b_i)$ and $(c_i)$ we have defined $(a_n)_{n\in\Z}$, and it follows from (\ref{eq:y_in_range}) and (\ref{eq:x_in_range}) that $f[(a_n)]\in R$.
}

It is left to show that $D$ (or $D_+$) is of growth rate $O(T^2)$. To see that, consider the set 
\[B\df\left\{\left(\sum_{n\ge 0} a_n2^n,\sum_{n<0}a_n2^{-n}\right)\in\R^2:(a_n)\in\Fin\right\}.\]
Observe that the mapping $g:D_+\to B$ which is defined in the obvious way by
\[\left(\sum_{n\in\Z}a_n2^n,\sum_{n\in\Z}a_n2^{-n}\right)\stackrel{g}\mapsto\left(\sum_{n\ge 0} a_n2^n,\sum_{n<0}a_n2^{-n}\right)\]
is a bijection, and for any $(x,y)\in D_+$ we have $\norm{(x,y)-g(x,y)}_2\le\sqrt{5}$ (where $\norm{\cdot}_2$ denotes the Euclidean norm). But since $B=\N\times 2\N$, the assertion follows.
\end{proof}

\begin{remark}
We want to stress that $D$ is not a Danzer set in $\R^2$ and not even a dense forest. To see it, observe that symmetric sequences $(a_n)$ correspond to points on the line $y=x$. On the other hand, non-symmetric sequences correspond to points $(x,y)$ with $\absolute{x-y}>1$, and in particular $D$ misses a neighborhood of the line $y=x+\frac{1}{4}$.    
\end{remark}

\section{Proof of Theorem \ref{thm:main2}}
Fix $d\geq 2$. Let $V = \{\ttt\in \R^d : \sum_{i=1}^d t_i = 0\}$, and for each $\ttt\in V$ let $g_{\ttt}\in \SL_d(\R)$ be the diagonal matrix whose entries are $e^{t_i}$. Then $\ttt\mapsto g_{\ttt}$ is a homomorphism.


\begin{proof}[Proof of Theorem \ref{thm:main2}]
Let $K$ be a totally real number field of degree $d$, and let $\OO_K$ be its ring of integers. Let $\phi_1,\ldots,\phi_d:K\to\R$ be the Galois embeddings of $K$ into $\R$, and let $\Phi: K \to \R^d$ be their direct sum. Then $\Lambda \df \Phi(\OO_K)$ is a lattice in $\R^d$. To see that $\Lambda$ is admissible, fix $\xx = \Phi(\alpha)\in\Lambda$, and observe that if $\xx\neq 0$,
\[
\prod_{i = 1}^d |x_i| = \prod_{i = 1}^d |\phi_i(\alpha)| = |N(\alpha)| \in\Z\minus\{0\}.
\]
Here $N$ denotes the norm in the field $K$. In particular, $\prod_{i = 1}^d |x_i| \geq 1$ and thus $\prod_{i = 1}^d |e^{t_i} x_i| \geq 1$ for all $\ttt\in V$. It follows that $|e^{t_i} x_i| \geq 1$ for some $i = 1,\ldots,d$ and thus $\|g_\ttt \xx\| \geq 1$. Since $\ttt$, $\xx$ were arbitrary, Mahler's compactness criterion shows that $\Lambda$ is admissible.

For the second part of the proof, let $\Lambda$ be an admissible lattice in $\R^d$. Let $R$ be an aligned box disjoint from $\Lambda$. Then there exists $\ttt\in V$ such that $g_\ttt R$ is a cube. By assumption $g_\ttt \Lambda$ is in a compact subset $K\subseteq\Omega$, hence the codiameter\footnote{The \emph{codiameter} of a lattice $\Gamma\subseteq\R^d$ is the diameter of the quotient space $\R^d/\Gamma$ (with respect to the quotient metric $d([\xx],[\yy]) = \min\{\|\yy - \xx\| : \xx,\yy \text{ representatives of }[\xx],[\yy]\}$), or equivalently the maximum of the function $\R^d\ni\xx\mapsto d(\xx,\Gamma)$. The codiameter is continuous as a function of the lattice.} of $g_\ttt\Lambda$ is bounded above by a constant independent of $\ttt$. But since $g_\ttt R$ is disjoint from $g_\ttt\Lambda$, the distance from the center of $g_\ttt R$ to the complement of $g_\ttt R$, i.e. half the edge length of the cube $g_\ttt R$, is bounded above by the distance from the center of $g_\ttt R$ to $g_\ttt\Lambda$, which is in turn bounded above by the codiameter of $g_\ttt\Lambda$. Thus both the diameter and the volume of $g_\ttt R$ are bounded above by a constant independent of $\ttt$. Since $\Vol(R) = \Vol(g_\ttt R)$, the proof is complete.
\end{proof}

\ignore{
\begin{remark}
The proof uses a dynamical characterization for a lattice $\Lambda$ to be an align-Danzer set, in terms of the orbit of $\Lambda$ under the diagonal group $A$. A similar criteria for Danzer sets appears in \cite[Proposition 3.1]{SW}, where the orbit under the whole group $\SL_d(\R)$ is considered. One may want to apply this approach with some intermediate group $A\subsetneq G\subsetneq \SL_d(\R)$, trying to construct a set that intersects a larger class of rectangles. But this approach will fail since if $A\subsetneq G$, $G$ must contain a unipotent element, which in turns implies that $G=\SL_d(\R)$.
\end{remark}

\begin{remark}
Littlewood's conjecture can be formulated in terms of admissible lattices. Specifically, let us call a lattice \emph{forward-admissible} if its orbit under the semigroup $\{g_\ttt : \ttt\in V ,\; t_i > 0 \forall i \neq 0\}$ is precompact in $\Omega$. Then Littlewood's conjecture is equivalent to the statement that no lattice of the form
\[
\begin{pmatrix}
1 &&\\
\alpha & 1 &\\
\beta && 1
\end{pmatrix}\cdot \Z^3 \qquad (\alpha,\beta\in \R)
\]
is forward-admissible.
\end{remark}
}

\end{document}